\DeclareMathOperator*{\argmin}{arg\,min}
\definecolor{mycolor4}{RGB}{230,97,1}
\definecolor{mycolor2}{RGB}{178,171,210}
\definecolor{mycolor3}{RGB}{253,184,99}
\definecolor{mycolor1}{RGB}{94,60,153}
\pretocmd\@bibitem{\color{black}\csname keycolor#1\endcsname}{}{\fail}
\newcommand\citecolor[1]{\@namedef{keycolor#1}{\color{blue}}}
\DeclareMathAlphabet{\mathcal}{OMS}{cmsy}{m}{n}
\def\beq{\begin{equation}}
\def\eeq{\end{equation}}
\newcommand{\mc}{\mathcal}
\newcommand{\Z}{\mathbb{Z}}
\newcommand{\R}{\mathds{R}}
\newcommand{\defineas}{\coloneqq}
\newcommand{\norm}[1]{\left\lVert#1\right\rVert}
\definecolor{mycolor1}{RGB}{230,97,1}
\definecolor{mycolor2}{RGB}{178,171,210}
\definecolor{mycolor3}{RGB}{253,184,99}
\definecolor{mycolor4}{RGB}{94,60,153}
\definecolor{mycolor5}{rgb}{0,0,0}
\tikzset{
  pics/car/.style args={#1}{
     code={
     \begin{scope}[scale=0.15]
      \shade[top color=#1, bottom color=white, shading angle={135}]
        [draw=black,fill=red!20,rounded corners=0.2ex] (1.5,.5) -- ++(0,1) -- ++(1,0.3) --  ++(3,0) -- ++(1,0) -- ++(0,-1.3) -- (1.5,.5) -- cycle;
    \draw[ rounded corners=0.5ex,fill=black!20!blue!20!white]  (2.5,1.8) -- ++(1,0.7) -- ++(1.6,0) -- ++(0.6,-0.7) -- (2.5,1.8);
    \draw[thick]  (4.2,1.8) -- (4.2,2.5);
    \draw[draw=black,fill=gray!50,thick] (2.75,.5) circle (.5);
    \draw[draw=black,fill=gray!50,thick] (5.5,.5) circle (.5);
    \end{scope}
     }
  }
}
\newtheorem{theorem}{Theorem}
\newtheorem{proposition}{Proposition}
\newtheorem{definition}{Definition}
\newtheorem{remark}{Remark}
\newtheorem{problem}{Problem}
\title{\LARGE \bf
A Pricing Mechanism for Balancing the Charging of Ride-Hailing Electric Vehicle Fleets 
}
\author{Marko Maljkovic, Gustav Nilsson, and Nikolas Geroliminis
\thanks{M.~Maljkovic, G.~Nilsson, and N.~Geroliminis are with the School of Architecture, Civil and Environmental Engineering, École Polytechnique Fédérale de Lausanne (EPFL), 1015 Lausanne, Switzerland. {\tt\small \{marko.maljkovic, gustav.nilsson, nikolas.geroliminis\}@epfl.ch}.}%
\thanks{This work was supported by the Swiss National Science Foundation under NCCR Automation, grant agreement 51NF40\_180545.}
\iftoggle{full_version}{}{\thanks{An extended version containing all the proofs is avilable at \url{https://arxiv.org/abs/2203.0610X}}}%
}
\begin{document}

\maketitle
\thispagestyle{empty}
\pagestyle{empty}

\begin{abstract}
Both ride-hailing services and electric vehicles are becoming increasingly popular and it is likely that charging management of the ride-hailing vehicles will be a significant part of the ride-hailing company's operation in the near future. Motivated by this, we propose a game theoretic model for charging management, where we assume that it is the fleet-operator that wants to minimize its operational cost, which among others include the price of charging. To avoid overcrowded charging stations, a central authority will design pricing policies to incentivize the vehicles to spread out among the charging stations, in a setting where several ride-hailing companies compete about the resources. We show that it is possible to construct pricing policies that make the Nash-equilibrium between the companies follow the central authority's target value when the desired load is feasible. Moreover, we provide a decentralized algorithm for computation of the equilibrium and conclude the paper with a numerical example illustrating the results.
\end{abstract}

\section{Introduction}

Ride-hailing services have become more and more popular over the past years and are nowadays an essential part of transportation services in many cities. Also, electric vehicles (EVs) are becoming more common and are soon likely to become a significant part of the fleets of vehicles that ride-hailing companies manage. Since ride-hailing companies already offer access to cleaning and service stations to their drivers, it is not unlikely that in the future, they will offer discounted charging as well. By doing so, the companies can gain control of both the coverage by sending vehicles to charge in areas where there is demand and the availability, e.g., by incentivizing the drivers to charge up their vehicles before demand peaks. Moreover, given the asymmetric distribution of origins and destinations, pricing incentives can contribute also in rebalancing vehicles in regions of higher demand. In the case of autonomous fleets, the ride-hailing company would have total control of the vehicles and could also fully control the charging.

Inspired by this vision, this paper presents a pricing mechanism to load balance the ride-hailing vehicles among different charging stations. We study a scenario where a central body, e.g., the government of the city or the power providing company, defines the set points describing how the vehicles should spread out among the charging facilities in an attempt to either help fight the congestion in the city or to balance the demand on the power grid. The central body is incentivizing the ride-hailing companies to follow the desired set points through pricing while each company is trying to optimize its operational cost by directing its vehicles to different charging stations. A schematic representation of the problem is shown in Figure~\ref{fig:problem}. Due to every company's interest in minimizing its queuing time at the stations, there is an inherent competition among them establishing fertile ground for game theoretic analysis.

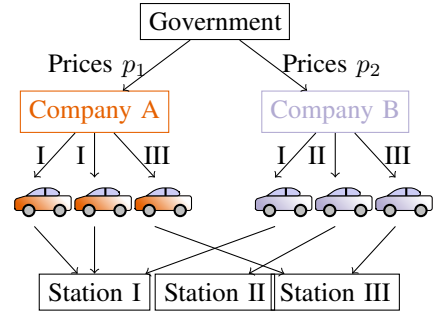
\begin{figure}
    \centering
    \begin{tikzpicture}[scale=0.8]
        \node[draw] (gov) at (0,0) {Government};
        \node[draw,mycolor1] (comA) at (-2,-1.5) {Company A};
        \node[draw, mycolor2] (comB) at (2,-1.5) {Company B};

       \node (veh1) at (-3,-3) {\begin{tikzpicture}\draw (0,0) pic{car=mycolor1}; \end{tikzpicture}};
       \node (veh2) at (-2,-3) {\begin{tikzpicture}
 \draw (0,0) pic{car=mycolor1}; \end{tikzpicture}};
       \node (veh3) at (-1,-3) {\begin{tikzpicture}
 \draw (0,0) pic{car=mycolor1}; \end{tikzpicture}};

       \node (veh4) at (1,-3) {\begin{tikzpicture}
 \draw (0,0) pic{car=mycolor2}; \end{tikzpicture}};
       \node (veh5) at (2,-3) {\begin{tikzpicture}
 \draw (0,0) pic{car=mycolor2}; \end{tikzpicture}};
       \node (veh6) at (3,-3) {\begin{tikzpicture}
 \draw (0,0) pic{car=mycolor2}; \end{tikzpicture}};
       
       \node[draw] (sta1) at (-2,-4.5) {Station I};
       \node[draw] (sta2) at (0,-4.5) {Station II};
       \node[draw] (sta3) at (2,-4.5) {Station III};
        \draw[->] (gov) -- node[left] {Prices $p_1$} (comA)  ;
        \draw[->] (gov) -- node[right] {Prices $p_2$} (comB);
        
        \draw[->] (comA) -- node[left] {I} (veh1.north);
        \draw[->] (comA) -- node[left] {I} (veh2.north);
        \draw[->] (comA) -- node[right] {III} (veh3.north);
        
        \draw[->] (comB) -- node[left] {I} (veh4.north);
        \draw[->] (comB) -- node[left] {II} (veh5.north);
        \draw[->] (comB) -- node[right] {III} (veh6.north);

        \draw[->] (veh1.south) -- (sta1);
        \draw[->] (veh2.south) -- (sta1);
        \draw[->] (veh3.south) -- (sta3);
        
        \draw[->] (veh4.south) -- (sta1);
        \draw[->] (veh5.south) -- (sta2);
        \draw[->] (veh6.south) -- (sta3);

    \end{tikzpicture}

    \caption{Schematic sketch of the problem setting. The central body, e.g., the government or the power company, wants to balance the vehicle load on different charging stations through pricing policies. With the provided pricing policies, each ride-hailing company wants to minimize its own operational cost by steering its vehicles to different charging stations.}
    \label{fig:problem}
\end{figure}

Research has shown that the frameworks of congestion, mean-field, Stackelberg, and inverse Stackelberg games are powerful tools for solving problems within the realm of transportation and mobility systems. In \cite{article1, article2}, through congestion game based routing, tolling mechanisms have been designed for congestion control of urban networks, whereas in \cite{article3}, charging station allocation for a population of EVs has been performed. The structure of the mean-field games, where the utility of each player depends on the aggregation of other players' decisions, offers a suitable setting for charging control of a population of EVs as presented in \cite{Paccagnan2016a, Paccagnan2016b, Paccagnan2019, EVsCharging}. Our work is similar to \cite{Paccagnan2016a, Paccagnan2016b, Paccagnan2019, EVsCharging} in a sense that the underlying structure of our problem can also be described by an aggregative game. However, we also go along the line of research that focuses on Stackelberg and inverse Stackelberg games to design pricing and tolling mechanisms primarily for revenue maximization. In \cite{article4, article5}, the charging stations act as revenue maximizing leaders in a Stackelberg game, whereas individual EVs act as charging cost minimizing followers. The setup in \cite{article4, article5} assumes fixed optimal prices of charging. In this paper, we propose using a pricing mechanism based on the decision of the ride-hailing companies which allows us to directly influence the placement of the Nash equilibrium. This makes our setup more similar to the ones presented in \cite{article6, article7} where inverse Stackelberg game has been used to solve hierarchical control and bi-level optimal toll design problems. The inverse Stackelberg pricing schemes are different from the Stackelberg ones in a sense that the prices are not a priori set to a certain value by the leading player, i.e., the central body, but are rather announced as a function of the followers' decisions. This means that the companies do not know what the charging prices will be before they make a decision on how to direct their vehicles but rather how their joint decision will influence the prices of charging.

To the best of our knowledge, no work so far has provided a comprehensive framework for analysing the problem of balancing the charging of EV fleets operated by ride-hailing companies so as to achieve the objective of a higher level authority. Moreover, we do so in a decentralized manner, with little private information exchange between the ride-hailing companies and the government and under the reachability constraints imposed by the state of the individual car fleets.              

The paper is outlined as follows: the rest of this section is devoted to introducing some basic notation. In Section~\ref{sec:model} we introduce the model and state the main formulation. In the following section, Section~\ref{sec:pricing}, we present the pricing mechanism and show that this pricing mechanism achieves a unique Nash-equilibrium between the companies. We also provide an algorithm to compute the Nash-equilibrium. In Section~\ref{sec:example}, we illustrate the proposed solution through a numerical example and conclude the paper  with some ideas for future research in Section~\ref{sec:conclusion}.

\subsection{Notation}
Let $\R$ denote the set of real numbers, and $\R_+$ the set of non-negative reals. Let $\mathbf{0}_{m}$ and $\mathbf{1}_m$ denote the all zero and all one vectors of length $m$ respectively, and $\mathbb{I}_m$ the identity matrix of size $m \times m$. For a finite set $\mc A$, we let $\R_{(+)}^{\mc A}$ denote the set of (non-negative) vectors indexed by the elements of $\mc A$, $\left|\mc A\right|$ the cardinality of $\mc A$ and we let $\mc P_{\mc A}$ be the probability space over the set, i.e., $\mc P_{\mc A} \defineas \{ x \in \R_+^{\mc A} \mid \sum_{i \in \mc A} x_i = 1\}$. For a diagonal matrix $A \in \R^{n \times n}$, we let $A^*$ denote its pseudo-inverse, i.e., 
\begin{equation}
A^*_{ii} \defineas 
\begin{cases} 1/A_{ii} & \textrm{if } A_{ii} \neq 0, \\
0 & \textrm{otherwise,}
\end{cases}\quad 1 \leq i \leq n\,.
\end{equation}

\section{Model}\label{sec:model}
We consider a setting where different ride-hailing companies have access to common charging stations for their electric vehicles. We let $\mc C$ denote the set of companies, and $N_i >0$ the number of vehicles belonging to each company $i \in \mc C$. The vector of the number of vehicles for all companies is denoted by $N \in \R_+^{\mc C}$. We let $\mc M$ represent the set of charging stations, and $M_j > 0$ the number of spots available at each charging station $j \in \mc M$, i.e., the charging station's capacity. The vector of all charging stations capacities is denoted by $M \in \R_+^{\mc M}$ and the cardinality of $\mc M$ as $m=|\mc M|$.

For each company $i \in \mc C$, we let $\mc V_i$ be the set of its vehicles with $\left|\mc V_i\right|=N_i$ and $x^{i} \in \mc P_{\mc M}$ denote the fraction of vehicles that the company wants to send to each charging station, i.e., $x^i_j$ is the fraction of vehicles from company $i \in \mc C$ that will be sent to charging station $j \in \mc M$. Furthermore, let $n^{i}_{j}\in\Z_{+}$ denote the integer number of vehicles, associated with the continuous allocation $x^{i}_{j}$, that the operator of the fleet would send to station $j$. Since not all charging stations are reachable for all vehicles and hence not all choices of $x^{i}$ are feasible, we define for each company the feasibility sets $\mc F_{j}^{i} \defineas \left\{v \in \mc V_i \mid v \text{ can reach station } j \right\}$. 

We say that a continuous allocation vector $x^{i}$ is feasible, if it allows the operator of the company to choose any discrete allocation $n^{i}\in\Z^{\mc M}_{+}$ where individual $n^{i}_{j}$ can be either $\left\lfloor{N_{i}x^{i}_{j}}\right\rfloor$ or $\left\lceil{N_{i}x^{i}_{j}}\right\rceil$ under the constraints that $\sum_{j\in\mc M}n^{i}_{j}=N_{i}$ and that there exists a feasible matching between the vehicles and the charging stations for the chosen $n^{i}$. For each company $i\in\mc C$, we let $\mc K_{i}\subseteq \mc P_{\mc M}$ denote the set of all feasible $x^{i}$. Furthermore, we define $\mc K\defineas \prod_{i\in\mc C} \mc K_{i}$ and $\mc K_{-i}\defineas \prod_{j\in\mc C\setminus i} \mc K_{j}$.


We let $x \defineas \left[x^{i}\right]_{i\in \mc C}\in \mc K$ denote all companies' decision vectors, $x^{-i} \defineas \left[x^{j}\right]_{j\in \mc C \setminus i}\in\mc K_{-i}$ denote the decision vectors of all companies except the company $i$, $\sigma\left(x\right) \defineas \sum_{i \in \mc C} N_{i}x^{i} \in \R_+^{\mc M}$ denote the vector consisting of the total number of vehicles that have chosen each station and $\sigma\left(x^{-i}\right) \defineas \sum_{j \in \mc C \setminus i} N_j x^j \in \R_+^{\mc M}$ denote the vector consisting of the number of vehicles from the other companies that have chosen each station.

To easily distinguish between the agents, we refer to the central authority as the ``government''. It is interested in balancing the vehicles so as to minimize the personal objective of the form
\begin{equation}
    J_G (\sigma\left(x\right)) = \frac{1}{2}\sigma\left(x\right)^{T}A_{G} \sigma\left(x\right)+b_{G}^{T}\sigma\left(x\right) \,,
    \label{eq:JG}
\end{equation}
for some diagonal matrix $A_{G}\succ 0$ and $b_{G}\in \R^{\mc M}$. In this paper, we are particularly interested in balancing the vehicles so that the number of vehicles charging at each station equals $\hat{N}$, i.e., to minimize  
\begin{equation}
    J_G (\sigma\left(x\right)) = \frac{1}{2}\|\sigma(x)-\widehat{N}\|^{2}_{2,A_{G}} \,,
    \label{eq:JG2}
\end{equation}
where $A_G$ gives the government the possibility to penalize deviations from the desired number of vehicles differently at different stations. It should be noted that~\eqref{eq:JG2} is a special case of~\eqref{eq:JG} that can be obtained by letting in $b_{G}=-A_{G}\widehat{N}\in \R^{\mc M}$. 

To steer the companies to the minimum of~\eqref{eq:JG2}, the government will assign an individual pricing policy to each company for each charging station. The policy will be a function of the choice of the company itself but also of the other companies' choices, since the government's interest is to control the total number of vehicles.
For company $i\in \mc C$, the pricing policy is $p_{i}\left(x^{i},x^{-i}\right) : \mc K_{i}\times \mc K_{-i}\rightarrow \R^{\mc M} $.

After the pricing policies are announced, the government and the companies admit an inverse Stackelberg game in which every company is trying to minimize its own operational cost, under the constraint that all the company's vehicles must be able to reach a charging station. We model the operational cost for each company as a sum of three terms. The first term, denoted as the queuing cost, depends on the choice of the company itself but also on the cumulative choice of all other companies and has the general form 
\begin{equation}
J_{1}^{i}\left(x^{i}, \sigma(x)\right)=\frac{1}{2} (x^{i})^T A_{i}x^{i}+(x^{i})^TB_{i}\sigma\left(x^{-i}\right)+c_{i}^{T}x^{i} \,,
\label{eq:J1igen}
\end{equation}
for some diagonal matrices $A_{i} \in \R^{\mc M \times \mc M}$, $B_{i} \in \R^{\mc M \times \mc M}$ and $c_{i}\in\R^{\mc M}$. In this paper, we model the expected queuing cost as 
$J_{1}^{i}\left(x^{i}, \sigma(x)\right)=N_{i}\left(x^{i}\right)^{T}Q\left(\sigma(x)-M\right),$
which is a special case of \eqref{eq:J1igen} if we set $A_{i}\defineas2N_{i}^{2}Q$, $B_{i}\defineas N_{i}Q$ and $c_{i}\defineas -N_{i}QM$.
Here, $Q\in\R^{\mc M\times \mc M}$ is a positive definite diagonal scaling matrix whose diagonal entries describe how expansive  it is to queue in the regions around charging stations. Generally, more congested areas should experience higher queuing costs and hence higher scaling factors. We model the second term which describes the charging cost as a function of the choice of the company and the pricing policy assigned to it, i.e.,
$J_{2}^{i}\left(x^{i}, p_{i}\left(x^{i},x^{-i}\right)\right)=(x^{i})^T D_{i} p_{i}\left(x^{i},x^{-i}\right)$, 
for some diagonal $D_{i}\succeq 0$. The diagonal entry $\left(D_{i}\right)_{kk}$ can be interpreted as the part of the total charging demand to be served at the charging station $k$. The third term we denote as the negative expected revenue and model it as a function of only the company's choices, i.e., $J_{3}^{i}\left(x^{i}\right)=f_{i}^{T} x^{i}$. Here, we interpret the negative expected revenue as the difference between the cost of fleet being idle while traveling to the charging stations and the expected profit in the regions around charging stations after the charging has been completed. The information about the negative expected revenue per vehicle is encoded in $f_{i}$. Hence, the company cost can be in general expressed as
\begin{equation}
    J^{i}\left(x^{i},x^{-i}\right) = J_{1}^{i}\left(x^{i}, \sigma(x)\right)+J_{2}^{i}\left(x^{i}, p_{i}\left(x^{i},x^{-i}\right)\right)+J_{3}^{i}\left(x^{i}\right) \,
    \label{eq:companyloss}
\end{equation}
and each company $i \in \mc C$ would like to allocate its vehicles according to
\begin{equation}
    x^{i*} \in \argmin_{x^{i} \in \mc K_i} J^{i}\left(x^{i},x^{-i}\right) \,. 
    \label{eq:companymin}
\end{equation}

We say that the government and the companies admit a system optimum if there exists $x^{*}$ that minimizes \eqref{eq:JG} and satisfies \eqref{eq:companymin} for all $i\in\mc C$. We will show in the following section that if we can reduce the decision space of the companies to convex subsets $\overline{\mc K}_{i}\subseteq \mc K_{i}$, under the proposed pricing strategies there will be a unique system optimum.

To summarize, we consider the problem of designing prices, such that each company will steer its fleet of vehicles towards predefined target values of vehicle accumulations around different charging stations. A schematic sketch of the problem is shown in Figure~\ref{fig:problem} and the problem is formally stated below.
\begin{problem} \label{problem:theproblem}
Design pricing policies $p_{i}\left(x^{i},x^{-i}\right)$ and the constraint sets $ \overline{\mc K}_{i}\subseteq \mc K_{i}$ such that there is a unique Nash equilibrium of the game $G$ defined as
\begin{equation}
    G\defineas \left\{\min_{x^{i}\in \overline{\mc K}_{i}}J^{i}\left(x^{i},x^{-i}\right),\forall i\in \mc C\right\} \,,
    \label{eq:gameform}
\end{equation}
with $J^{i}$ defined as in~\eqref{eq:companyloss}. Moreover, the Nash equilibrium should be such that it also minimizes the government cost $J_G(x)$
in~\eqref{eq:JG} and the design of the constraint sets $\overline{\mc K}_{i}$ such that existence of a feasible discrete allocation scheme for each company 
is guaranteed.  
\end{problem}

\section{Pricing Mechanism}\label{sec:pricing}
We begin this section by showing how the sets $\overline{\mc K}_{i}$ can be constructed. With the existence of those sets, we then proceed to introduce a pricing policy that achieves a unique Nash equilibrium for allocating the vehicles of all companies. Moreover, we show that this Nash equilibrium also minimizes the government's cost function which makes it a unique system optimum. In the last part of the section, we propose an algorithm for computing the Nash equilibrium.

In the following proposition, we show how to analytically construct convex sets $\overline{\mc K}_{i} \subseteq \mc K_{i}$ based on feasibility sets $\mc F^{i}_{j}$, that guarantee feasibility of $x^{i}$ defined as in Section \ref{sec:model}.

\begin{proposition}\label{prop:Kbar}
For each company $i \in \mc C$, define the set $\overline{\mc K}_{i}\subseteq\mc K_{i}\subseteq \mc P_{\mc M}$ such that $x^{i}\in\overline{\mc K}_{i}$ if for all proper subsets $S$ of $\mathcal{M}$, it holds that
\begin{equation}
 N_{i} \sum_{j \in S} x^{i}_{j} \leq \max \left\{0,\left|\bigcup_{j \in S} \mc F_{j}^{i}\right|-|S|\right\}  \,.
 \label{eq:setki}
\end{equation}If the state of the car fleet does not correspond to a degenerate case for which $\overline{\mc K}_{i}=\emptyset$, then every $x^{i}\in\overline{\mc K}_{i}$ is feasible and $\overline{\mc K}_{i}$ is compact and convex. 
\end{proposition}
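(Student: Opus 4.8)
The plan is to split the statement into its two halves. The polyhedral structure (compactness and convexity) is essentially true by construction, so I would dispatch it quickly; the feasibility guarantee is the substance, and I would derive it from a bipartite matching argument via Hall's theorem. For the structural part, I would note that for each fixed proper subset $S\subseteq\mc M$ condition~\eqref{eq:setki} is an affine inequality $N_i\sum_{j\in S}x^i_j\le\beta_S$ in the variable $x^i$, with constant $\beta_S\defineas\max\{0,|\bigcup_{j\in S}\mc F^i_j|-|S|\}$. Hence $\overline{\mc K}_i$ is the intersection of the probability simplex $\mc P_{\mc M}$ — itself compact and convex — with the finitely many closed half-spaces indexed by the proper subsets of $\mc M$. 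A finite intersection of closed convex sets is closed and convex, and the inclusion $\overline{\mc K}_i\subseteq\mc P_{\mc M}$ forces boundedness; so whenever it is nonempty it is a compact convex polytope.

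For feasibility, I would first record that at least one admissible rounding exists: since $\sum_j N_ix^i_j=N_i\in\Z_+$, the quantity $N_i-\sum_j\lfloor N_ix^i_j\rfloor$ is a nonnegative integer, so rounding up exactly that many coordinates and the rest down yields $n^i\in\Z_+^{\mc M}$ with $n^i_j\in\{\lfloor N_ix^i_j\rfloor,\lceil N_ix^i_j\rceil\}$ and $\sum_j n^i_j=N_i$. Fix any such $n^i$ and model the vehicle-to-station assignment as a bipartite graph whose left vertices are the $N_i$ charging ``slots'' ($n^i_j$ copies for station $j$) and whose right vertices are the vehicles $\mc V_i$, with a slot of station $j$ adjacent to $v$ iff $v\in\mc F^i_j$. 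A feasible assignment for the chosen $n^i$ is exactly a matching saturating the slot side, and since there are $N_i=|\mc V_i|$ slots this is a perfect matching. By Hall's theorem it exists iff every set of slots has at least as many neighbours; because all slots of a station share the neighbourhood $\mc F^i_j$, it suffices to test whole-station families, reducing Hall's condition to
\[
\sum_{j\in S}n^i_j\le\left|\bigcup_{j\in S}\mc F^i_j\right|\qquad\text{for all }S\subseteq\mc M .
\]

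It then remains to verify this reduced condition for every admissible $n^i$. The case $S=\mc M$ is automatic: $\sum_j n^i_j=N_i=|\mc V_i|=|\bigcup_{j\in\mc M}\mc F^i_j|$, using that in the non-degenerate case every vehicle reaches some station (otherwise no matching ever exists and $\overline{\mc K}_i=\emptyset$); this is precisely why only proper subsets appear in~\eqref{eq:setki}. For a proper $S$, the bound $n^i_j\le\lceil N_ix^i_j\rceil\le N_ix^i_j+1$ summed over $j\in S$ gives $\sum_{j\in S}n^i_j\le N_i\sum_{j\in S}x^i_j+|S|$. When $|\bigcup_{j\in S}\mc F^i_j|\ge|S|$, condition~\eqref{eq:setki} bounds the first term by $|\bigcup_{j\in S}\mc F^i_j|-|S|$ and the $+|S|$ slack cancels exactly, yielding the reduced inequality; when $|\bigcup_{j\in S}\mc F^i_j|<|S|$, the right side of~\eqref{eq:setki} is $0$, forcing $x^i_j=0$ and hence $n^i_j=0$ for $j\in S$, so the inequality is trivial. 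As this holds for every proper $S$ and every admissible rounding, a feasible matching exists for each $n^i$, so $x^i$ is feasible.

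The step I expect to be most delicate is the calibration of the constant $\beta_S$: the $-|S|$ correction is exactly what absorbs the worst-case rounding slack $\sum_{j\in S}(\lceil N_ix^i_j\rceil-N_ix^i_j)<|S|$, while the outer $\max\{0,\cdot\}$ is what keeps the bound valid (by forcing $x^i_j=0$ on $S$) in the degenerate regime where the stations of $S$ are jointly reachable by fewer than $|S|$ vehicles. Reducing Hall's condition to whole-station subsets and confirming that a single affine inequality per subset simultaneously certifies \emph{all} admissible roundings is the crux; the remaining bookkeeping — existence of a rounding, the $S=\mc M$ case, and the convex-polytope argument — is routine.
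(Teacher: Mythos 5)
Your proof is correct and follows essentially the same route as the paper's: Hall's marriage theorem on the bipartite graph with $n^i_j$ slot-copies per station, the rounding-slack bound $\sum_{j\in S}n^i_j\le N_i\sum_{j\in S}x^i_j+|S|$ absorbed by the $-|S|$ in \eqref{eq:setki}, the separate treatment of $S=\mc M$ and of the $\max\{0,\cdot\}=0$ branch, and the polytope argument for compactness and convexity. Your write-up is if anything slightly more explicit than the paper's (existence of an admissible rounding, reduction of Hall's condition to whole-station subsets), but no new ideas are involved.
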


\iftoggle{full_version}{
\medskip
\noindent The proof of Proposition~\ref{prop:Kbar} is given in Appendix~\ref{app:proofKbar}.
}{
\medskip
\noindent The proof of Proposition~\ref{prop:Kbar} is given in the extended version of our paper.}
\medskip
\begin{remark}
For every subset $S$ of $\mc M$, a constraint on the discrete allocation vector $n^{i}$ given by 
\begin{equation}
\sum_{j\in S}n_{j}^{i}\leq \left|\bigcup_{j \in S} \mc F_{j}^{i}\right|\,,
\label{eq:disccond}
\end{equation}
must be fulfilled so that every vehicle is matched with exactly one charging station. Intuitively, inequality \eqref{eq:disccond} states that for any subset of the charging stations, the operator of the company must not allocate more vehicles than what is feasible. In fact, the constraint on the continuous allocation vector $x^{i}$ given by \eqref{eq:setki} is a tightened version of the constraint \eqref{eq:disccond} that guarantees the condition \eqref{eq:disccond} will be fulfilled regardless of how the operator chooses $n^{i}$ based on $x^{i}$. Degenerate states of the car fleet that result in $\overline{\mc K}_{i}=\emptyset$ correspond to cases where most of the vehicles have very limited options when choosing the station to charge and as such are not the subject of our interest.    
\end{remark}
\medskip
\noindent Let $\overline{\mc K}\defineas \prod_{i\in\mc C} \overline{\mc K}_{i}$ and $\overline{\mc K}_{-i}\defineas \prod_{j\in\mc C\setminus i} \overline{\mc K}_{j}$. We will now introduce our pricing mechanism.

\medskip

\begin{definition}[System Optimal Pricing Policies]\label{def:pricingpolicies}
For each company $i \in \mc C$, let  
\begin{equation}
    p_{i}\left(x^{i},x^{-i}\right)=D_{i}^{*}\left[\frac{1}{2}\overline{\text{A}}_{i} x^{i}+ \overline{\text{B}}_{i}\sigma\left(x^{-i}\right)+\Delta_{i}\right] \,.
    \label{eq:optimalpolicy}
\end{equation}
where $\overline{\text{A}}_{i}=N_{i}^{2} A_{G}-A_{i}$ , $\overline{\text{B}}_{i}=N_{i} A_{G}-B_{i}$ and $\Delta_{i}= N_{i}b_{G}-c_{i}-f_{i}$.
\end{definition}
\medskip
\begin{remark}
For a company $i \in \mc C$, unreachable stations will correspond to zero diagonal entries in the matrix $D_i$, which makes the matrix not invertible. However, since company $i$ will not use those charging stations, letting the prices for those stations be zero through the pseudo-inverse will not affect the solution of the problem.
\end{remark}
\medskip

We will later in this section show that these pricing polices minimize the government's objective, which explains why we refer to the pricing policies as system optimal.

Next, we will show that the proposed pricing policies will give raise to a unique Nash equilibrium in the game between the companies. 

\begin{theorem}
For all companies $i \in \mc C$, let the sets $\overline{\mc K}_i$ be designed as in Proposition~\ref{prop:Kbar}. Then, with the system optimal pricing policies in Definition~\ref{def:pricingpolicies}, the game $G$ in~\eqref{eq:gameform} has a unique Nash equilibrium. 
\label{th:t2}
\end{theorem}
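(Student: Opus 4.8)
The plan is to eliminate the prices in closed form and recognize the resulting game as an exact potential game whose potential is the government's objective. First I would substitute the policy of Definition~\ref{def:pricingpolicies} into the charging term $J_2^i=(x^i)^TD_i p_i$. Because $D_iD_i^{*}$ acts as the identity on the coordinates that company $i$ can actually reach, while every $x^i\in\ov{\mc K}_i$ vanishes on the unreachable coordinates, the factor $D_iD_i^{*}$ drops out and $J_2^i=\tfrac12 (x^i)^T\ov{\text{A}}_i x^i+(x^i)^T\ov{\text{B}}_i\sigma(x^{-i})+\Delta_i^T x^i$. Adding $J_1^i$ and $J_3^i$ and inserting $\ov{\text{A}}_i=N_i^2A_G-A_i$, $\ov{\text{B}}_i=N_iA_G-B_i$, $\Delta_i=N_ib_G-c_i-f_i$, all the company-specific data $A_i,B_i,c_i,f_i$ cancel and the effective cost collapses to
\begin{equation*}
J^i(x^i,x^{-i})=\tfrac12 N_i^2 (x^i)^TA_G x^i+N_i (x^i)^TA_G\sigma(x^{-i})+N_i b_G^T x^i .
\end{equation*}
This cancellation is the decisive simplification, and everything else rests on it.

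Next I would differentiate. A direct computation gives $\nabla_{x^i}J^i=N_i^2A_Gx^i+N_iA_G\sigma(x^{-i})+N_ib_G=N_i\bigl(A_G\sigma(x)+b_G\bigr)$, which is exactly $\nabla_{x^i}J_G(\sigma(x))$. Hence $\Phi(x)\defineas J_G(\sigma(x))$ is an exact potential for $G$, so the pseudo-gradient $F(x)\defineas(\nabla_{x^i}J^i)_{i\in\mc C}$ equals $\nabla_x\Phi$ and the Nash equilibria of $G$ coincide with the solutions of the variational inequality $F(x^{*})^T(x-x^{*})\ge0$ for all $x\in\ov{\mc K}$, i.e.\ with the minimizers of the convex function $\Phi$ over $\ov{\mc K}$. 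Existence is then immediate: each $\ov{\mc K}_i$ is nonempty, compact and convex by Proposition~\ref{prop:Kbar}, so $\ov{\mc K}=\prod_i\ov{\mc K}_i$ is too, and the continuous convex $\Phi$ attains its minimum there; moreover each $J^i$ is strictly convex in $x^i$, its Hessian $N_i^2A_G$ being positive definite.

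For uniqueness I would try to show that $F$ is strictly monotone and then invoke the standard uniqueness theorem for strictly monotone variational inequalities. Evaluating the monotonicity gap, the company-specific quadratic terms telescope and one is left with
\begin{equation*}
\bigl(F(x)-F(y)\bigr)^T(x-y)=\bigl(\sigma(x)-\sigma(y)\bigr)^TA_G\bigl(\sigma(x)-\sigma(y)\bigr)\ge0 ,
\end{equation*}
with equality precisely when $\sigma(x)=\sigma(y)$. This already forces every equilibrium to realize the same aggregate load $\sigma^{*}$, which is what pins the government's cost to its optimum. The main obstacle is that $A_G\succ0$ only makes $F$ monotone, not strictly monotone, since the aggregation map $x\mapsto\sigma(x)$ has a nontrivial kernel as soon as $|\mc C|>1$; monotonicity alone delivers uniqueness of $\sigma^{*}$ but not of the disaggregated profile. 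To bridge this gap I would combine the common value $\sigma^{*}$ with the strict convexity of each $J^i$ in $x^i$ and the explicit facet description of $\ov{\mc K}_i$ from Proposition~\ref{prop:Kbar}: with $\sigma^{*}$ fixed, each company solves a strictly convex program, and the plan is to use the reachability-induced geometry of $\ov{\mc K}_i$ to argue that its minimizer is uniquely located. Turning uniqueness of the aggregate into uniqueness of every $x^{i*}$ is the delicate step, and it is the place where the structure of the constraint sets, rather than the potential structure, must do the real work.
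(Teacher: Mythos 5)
Your reduction of the company costs is exactly the paper's: substituting the policy and using $D_iD_i^{*}x^i=x^i$ on $\ov{\mc K}_i$ collapses $J^i$ to $\tfrac12 N_i^2 (x^i)^TA_Gx^i+N_i(x^i)^T\left(A_G\sigma(x^{-i})+b_G\right)$, and existence then follows either from your potential-minimization argument or, as the paper does, from Rosen's existence theorem on the compact convex sets $\ov{\mc K}_i$. Up to that point the two arguments are interchangeable.

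The uniqueness step, however, is left open in your proposal, and that is the substance of the theorem. You correctly compute that the monotonicity gap is $\left(\sigma(x)-\sigma(y)\right)^TA_G\left(\sigma(x)-\sigma(y)\right)$, conclude only that the aggregate $\sigma^{*}$ is unique, and defer the passage from a unique aggregate to a unique profile to an unspecified argument about the facial geometry of $\ov{\mc K}_i$. That deferred step is not a routine detail: at any equilibrium the gradient $\nabla_{x^i}J^i=N_i\left(A_G\sigma^{*}+b_G\right)$ is the same constant vector, so the variational inequality for company $i$ only pins $x^{i*}$ down to a face of $\ov{\mc K}_i$ minimizing that linear functional, and the strict convexity of $J^i$ in $x^i$ does not rescue you, because the cross term $N_i(x^i)^TA_G\sigma(x^{-i})$ changes between candidate equilibria in a way that offsets the quadratic term once the aggregate is held fixed. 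The paper closes uniqueness differently: it invokes Rosen's sufficient condition, forming $\Gamma=G(x,r)+G^{T}(x,r)$ with $r=\mathbf{1}_{|\mc C|}$, computing $x^{T}\Gamma x=-2\sigma(x)^{T}A_G\sigma(x)$, and using that $\sigma(x)\neq\mathbf{0}_{|\mc M|}$ for every $x\in\ov{\mc K}$ (each block lies in a probability simplex) to declare $\Gamma$ negative definite on $\ov{\mc K}$. To finish your proof you must either carry out your disaggregation argument explicitly or adopt this route; note, though, that the obstacle you identified --- the aggregation map has a nontrivial kernel once $|\mc C|>1$, so the quadratic form can vanish on differences $x-y$ of feasible points with $x\neq y$ --- is exactly the quantity that Rosen's diagonal strict concavity condition must control, so you should satisfy yourself that evaluating the form on points of $\ov{\mc K}$ rather than on such differences is sufficient before relying on it.
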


\begin{proof}
To prove existence and uniqueness of the Nash equilibrium, we rely on techniques from~\cite{Rosen}. Inserting policy \eqref{eq:optimalpolicy} into \eqref{eq:companyloss}, and utilizing that for $x^{i}\in\overline{\mc K}_{i}$ it holds that $D_{i}D_{i}^{*}x^{i}=x^{i}$, transforms the cost of each company $i\in\mc C$ into $J^{i} (x^i, x^{-i}) =\frac{1}{2}\left(x^{i}\right)^{T}N_{i}^{2}A_{G}x^{i}+\left(x^{i}\right)^{T}N_{i}\left(A_{G}\sigma\left(x^{-i}\right)+b_{G}\right)$.
Since the action spaces $\overline{\mc K}_{i}$ are compact (as a subset of the probability space over $\mc M$), convex and satisfy Slater's constraint qualification by construction, $J^{i}(x)$ are continuous in $x\in \overline{\mc K}$, $J^{i}(x^{i},x^{-i})$ are convex in $x^{i}\in\overline{\mc K}_{i}$ for a fixed $x^{-i}\in\overline{\mc K}_{-i}$ and players perform minimization of the objective, \cite[T.1]{Rosen} guarantees existence of a Nash equilibrium. According to \cite[T.2]{Rosen}, a sufficient condition for the Nash equilibrium to be unique is that the symmetric matrix $\Gamma\defineas G(x,r)+G^{T}(x,r)$ be negative definite for $x\in\overline{\mc K}$ and some $r=\left[r_{i}\right]_{i\in\mc C}\in \R^{\mc C}_{>0}$ with $G(x,r)$ being the Jacobian with respect to $x$ of function $g(x,r)$ defined as $g(x,r)\defineas \left[-r_{i}\nabla_{x_{i}}J^{i}(x)\right]_{i\in \mc C} $. For $r=\textbf{1}_{|\mc C|}$ and any $x\in \overline{\mc K}$ we have
$
    x^{T}\Gamma x=-2\left(\sum_{i\in \mc C}N_{i}x^{i}\right)^{T}A_{G}\left(\sum_{i\in \mc C}N_{i}x^{i}\right)$.

Since $\overline{\mc K}_{i}\subseteq\mc P_{\mc M}$ for all $i \in \mc C$, we have $\sum_{i\in \mc C}N_{i}x^{i}\neq \textbf{0}_{|\mc M|}$. Since $A_{G}\succ 0$, we have  $x^{T}\Gamma x<0$ for all $ x\in\overline{\mc K}$ which proves that $\Gamma$ is negative definite on $\overline{\mc K}$ and that the Nash equilibrium is unique.
\end{proof}

\medskip

Now that we know that under the pricing policies given by \eqref{eq:optimalpolicy} the Nash equilibrium is unique, we proceed to show that it also minimizes the government objective \eqref{eq:JG}.
\begin{theorem}
For all companies $i \in \mc C$, let the sets $\overline{\mc K}_i$ be designed as in Proposition~\ref{prop:Kbar}. Then, with the system optimal pricing policies in Definition~\ref{def:pricingpolicies}, the Nash equilibrium $x^{*}$ of game $G$ satisfies
\begin{equation}
    x^{*} \in \argmin_{x \in \overline{\mc K}} J_{G}\left(\sigma(x)\right) \,. 
\end{equation}
\end{theorem}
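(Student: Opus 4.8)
The plan is to exploit the potential-game structure that the pricing policy induces. After substituting \eqref{eq:optimalpolicy}, the companies' costs reduce to the form already computed in the proof of Theorem~\ref{th:t2}, namely $J^{i}(x^{i},x^{-i})=\tfrac{1}{2}(x^{i})^{T}N_{i}^{2}A_{G}x^{i}+(x^{i})^{T}N_{i}\bigl(A_{G}\sigma(x^{-i})+b_{G}\bigr)$. First I would show that with these transformed costs the game is an \emph{exact potential game} whose potential is precisely the government objective $J_{G}(\sigma(x))$. Writing $\sigma(x)=N_{i}x^{i}+\sigma(x^{-i})$ and using that $A_{G}$ is symmetric, direct differentiation gives
\[
\nabla_{x^{i}}J^{i}(x)=N_{i}^{2}A_{G}x^{i}+N_{i}A_{G}\sigma(x^{-i})+N_{i}b_{G},
\]
while the chain rule applied to $J_{G}(\sigma(\cdot))$, together with $\partial\sigma(x)/\partial x^{i}=N_{i}\mathbb{I}_{m}$, yields the identical expression. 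Hence $\nabla_{x^{i}}J^{i}(x)=\nabla_{x^{i}}J_{G}(\sigma(x))$ for every $i\in\mc C$. This gradient identity is the key step and essentially the only genuine computation in the argument.

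Next I would translate the Nash equilibrium into a variational inequality. Since each $J^{i}(\cdot,x^{-i})$ is convex on the convex set $\overline{\mc K}_{i}$, the equilibrium $x^{*}$ is characterized by the first-order optimality conditions $\langle\nabla_{x^{i}}J^{i}(x^{*}),\,y^{i}-x^{i*}\rangle\ge 0$ for all $y^{i}\in\overline{\mc K}_{i}$ and all $i\in\mc C$. Because $\overline{\mc K}=\prod_{i\in\mc C}\overline{\mc K}_{i}$ is a product set, every $y\in\overline{\mc K}$ decomposes componentwise, so summing these inequalities over $i$ and invoking the potential identity yields $\langle\nabla_{x}J_{G}(\sigma(x^{*})),\,y-x^{*}\rangle\ge 0$ for all $y\in\overline{\mc K}$.

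Finally I would close the argument by convexity. The map $x\mapsto J_{G}(\sigma(x))$ is convex, being the composition of the linear map $x\mapsto\sigma(x)$ with the convex quadratic $J_{G}$ (recall $A_{G}\succ 0$). For a convex objective over a convex feasible set, the variational inequality just derived is not merely necessary but also sufficient for global minimality, so $x^{*}\in\argmin_{x\in\overline{\mc K}}J_{G}(\sigma(x))$, which is the claim; uniqueness of this minimizer is already supplied by Theorem~\ref{th:t2}. I expect the only delicate point to be the bookkeeping that aggregates the individual best-response optimality conditions into a single joint variational inequality over $\overline{\mc K}$ — this is exactly where the product structure of the feasible set is indispensable — together with verifying the gradient identity that makes $J_{G}$ an exact potential for the transformed game.
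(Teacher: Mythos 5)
Your proof is correct and follows essentially the same route as the paper: substitute the pricing policy, observe that $J_{G}(\sigma(\cdot))$ is an exact potential for the transformed game, aggregate the per-player first-order conditions over the product set $\overline{\mc K}$ into a single variational inequality, and invoke convexity of $J_{G}(\sigma(\cdot))$ to conclude global optimality. Your version is in fact slightly more explicit than the paper's, since you verify the gradient identity $\nabla_{x^{i}}J^{i}=\nabla_{x^{i}}J_{G}(\sigma(x))$ by direct computation rather than asserting it.
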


\begin{proof}
Let $A^{T}\defineas\left[N_{i}\mathbb{I}_{|\mc M|}\right]_{i\in \mc C}\in\R^{|\mc M||\mc C|\times |\mc M|}$, then the government optimization problem is equivalent to 
\begin{equation}
    \min_{x\in \overline{\mc K}}\:J_{G}(x)\defineas\frac{1}{2}x^{T}A^{T}A_{G}Ax+b_{G}^{T}Ax \,.
    \label{eq:JGx}
\end{equation}
The function $J_{G}(x)$ is convex since $\nabla^{2}_{x}J_{G}(x)=A^{T}A_{G}A$ and for all $x^{i} \in \overline{\mc K}_{i}\subseteq\mc P_{\mc M}$ it holds that $\sum_{i\in \mc C}N_{i}x^{i}\neq \textbf{0}_{|\mc M|}$ so $x^{T}A^{T}A_{G}Ax=\left(\sum_{i\in\mc C}N_{i}x^{i}\right)^{T}A_{G}\left(\sum_{i\in\mc C}N_{i}x^{i}\right)> 0$ which guarantees that $\nabla^{2}_{x}J_{G}(x)\succeq 0$. According to \cite[4.21]{ConvexOptimization}, $x^{*}$ is the minimizer of \eqref{eq:JGx} on $\overline{\mc K}$ if and only if $\left\langle\nabla_{x} J_{G}(x)\mid_{x=x^{*}}, y-x^{*}\right\rangle \geq 0$, $\forall y \in \overline{\mc K}$.
Under the pricing policies defined in \eqref{eq:optimalpolicy}, $J_{G}(x)$ is the exact potential \cite{PotentialGames} for game $G$ satisfying for all $ i\in\mc C$ and any fixed $ x^{-i}\in\overline{\mc K}_{-i}$
\begin{equation}
    \nabla_{x^{i}}J^{i}\left(x^{i},x^{-i}\right)=\nabla_{x^{i}}J_{G}\left(x^{i},x^{-i}\right) \,, \quad \forall x^{i}\in \overline{\mc K}_{i} \,.
    \label{eq:potential}
\end{equation}
If $\widehat{x}$ is the Nash equilibrium of $G$, then for all $ i\in\mc C$ we have $\widehat{x}^{i}\in\argmin_{x^{i}\in\overline{\mc K}_{i}}J^{i}\left(x^{i},\widehat{x}^{-i}\right)$. According to \cite{ConvexOptimization}, we can now write $\left\langle\nabla_{x^{i}} J^{i}(x)\mid_{x=\widehat{x}}, y^{i}-\widehat{x}^{i}\right\rangle \geq 0$, for all $ y^{i}\in\overline{\mc K}_{i}$ for all $i\in\mc C$. Because of \eqref{eq:potential}, for all $i\in\mc C$ and for all $ y^{i}\in\overline{\mc K}_{i}$ it holds that 
$
    \sum_{i\in\mc C}\left\langle\nabla_{x^{i}} J_{G}(x)\mid_{x=\widehat{x}}, y^{i}-\widehat{x}^{i}\right\rangle \geq 0 \,.
$
Finally, we have that $\widehat{x}$ indeed is the minimizer of \eqref{eq:JGx} since $\sum_{i\in\mc C}\left\langle\nabla_{x^{i}} J_{G}(x)\mid_{x=\widehat{x}}, y^{i}-\widehat{x}^{i}\right\rangle=\left\langle\nabla_{x} J_{G}(x)\mid_{x=\widehat{x}}, y-\widehat{x}\right\rangle$ is true for any $y\in\overline{\mc K}$.
\end{proof}

\medskip

Sets $\overline{\mc K}_{i}$ defined as \eqref{eq:setki} reflect the current state of the car fleets. In realistic scenarios, these sets are private, i.e., not known to the government, as they encompass information about the current true location of the vehicles and their current and desired battery status, preventing centralized computation of the Nash equilibrium. Hence, a decentralized algorithm with minimal exchange of information between the agents is required. Such algorithms based on theory of aggregative games were proposed in \cite{Paccagnan2016a}, \cite{Paccagnan2016b} and \cite{Decentralized}. Based on \cite{Decentralized}, since our game-map, defined as $F(x)=\left[\nabla_{x^{i}}J^{i}\left(x^{i},x^{-i}\right)\right]_{i\in\mc C}$, is equal to 
\begin{equation}
    F(x)=\underbrace{A^{T}A_{G}A}_{F_{1}} x+\underbrace{A^{T}b_{G}}_{F_{2}}
    \label{eq:gamemap}
\end{equation}
and is a non-strictly monotonic ($F_{1}\succeq 0$) linear operator, we utilize a distributed iterative scheme based on the Krasnoselskij iteration \cite{ApproxFixPoint} to find the Nash equilibrium of $G$.
\begin{proposition}
Under the system optimal pricing policies and for sets $\overline{\mc K}_i$ as in Proposition~\ref{prop:Kbar}, for every $\gamma$ such that
\begin{equation}
    0<\gamma<\frac{2}{\lambda_{\text{max}}\left(F_1\right)} \,,
    \label{eq:optgamma}
\end{equation}
a distributed iterative scheme given by
\begin{equation}
    x^{i}(k+1)=\frac{1}{2}\left(x^{i}(k)+\Pi_{\overline{\mc K}_{i}}\left[x^{i}(k)-\gamma \nabla_{x^{i}} J^{i}\left(x^{i}, x^{-i}\right)\right]\right) \,, 
\label{eq:iterativeproc}
\end{equation}
where $\Pi_{\overline{\mc K}_{i}}$ denotes the projection operator onto $\overline{\mc K}_{i}$, converges to the Nash equilibrium of the game $G$. 
\end{proposition}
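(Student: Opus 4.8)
The plan is to read the update \eqref{eq:iterativeproc} as a Krasnoselskij (averaged) fixed-point iteration for the projected game map $F$ of \eqref{eq:gamemap}, and to lean on the monotonicity structure $F_1 = A^{T}A_{G}A \succeq 0$ rather than on any contraction property. First I would reformulate the Nash equilibrium as a fixed point. By the same variational-inequality characterization used earlier (see \cite{ConvexOptimization} and the proof of the potential-minimization theorem), $x^{*}$ is a Nash equilibrium of $G$ if and only if $\langle F(x^{*}), y - x^{*}\rangle \geq 0$ for all $y \in \overline{\mc K}$, which is in turn equivalent to the fixed-point condition $x^{*} = \Pi_{\overline{\mc K}}[x^{*} - \gamma F(x^{*})]$ for any $\gamma > 0$. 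Since $\overline{\mc K} = \prod_{i \in \mc C}\overline{\mc K}_{i}$ is a Cartesian product, the projection splits as $\Pi_{\overline{\mc K}}[z] = [\Pi_{\overline{\mc K}_{i}}[z^{i}]]_{i\in\mc C}$, so the operator $T_{\gamma} \defineas \Pi_{\overline{\mc K}}[\mathrm{Id} - \gamma F]$ acts componentwise and the stacked form of \eqref{eq:iterativeproc} is exactly $x(k+1) = \tfrac{1}{2}(x(k) + T_{\gamma}(x(k)))$. This componentwise decomposition is precisely what renders the scheme decentralized: each company needs only its own gradient $\nabla_{x^{i}}J^{i}$, which couples to the others solely through the aggregate $\sigma(x^{-i})$. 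By Theorem~\ref{th:t2} the equilibrium exists and is unique, hence $\mathrm{Fix}(T_{\gamma}) = \{x^{*}\}$.

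Next I would establish that $T_{\gamma}$ is a nonexpansive self-map of $\overline{\mc K}$ for $\gamma$ in the range \eqref{eq:optgamma}. Because $F$ is affine with symmetric positive-semidefinite linear part $F_{1}$, the matrix $\mathrm{Id} - \gamma F_{1}$ has eigenvalues $1 - \gamma\mu$ with $\mu \in [0, \lambda_{\text{max}}(F_{1})]$; for $0 < \gamma < 2/\lambda_{\text{max}}(F_{1})$ these all lie in $(-1,1]$, so $\norm{\mathrm{Id} - \gamma F_{1}}_{2} \leq 1$ and $\mathrm{Id} - \gamma F$ is nonexpansive (equivalently, $F$ is co-coercive with modulus $1/\lambda_{\text{max}}(F_{1})$, the affine shift $-\gamma F_{2}$ being irrelevant to the Lipschitz constant). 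Composing with the firmly nonexpansive metric projection $\Pi_{\overline{\mc K}}$ keeps $T_{\gamma}$ nonexpansive and $\overline{\mc K}$-valued.

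Finally, I would invoke the convergence theory for the Krasnoselskij iteration of nonexpansive maps \cite{ApproxFixPoint} in the spirit of \cite{Decentralized}. Since $\overline{\mc K}$ is compact and convex by Proposition~\ref{prop:Kbar}, $T_{\gamma}: \overline{\mc K} \to \overline{\mc K}$ is nonexpansive with nonempty fixed-point set; the averaged iterates $x(k+1) = \tfrac{1}{2}(x(k) + T_{\gamma}(x(k)))$ stay in $\overline{\mc K}$ as convex combinations of points of $\overline{\mc K}$ and converge to a fixed point of $T_{\gamma}$, which by uniqueness must be $x^{*}$.

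The main obstacle, and the reason the $\tfrac{1}{2}$-averaging cannot be dropped, is exactly the non-strict monotonicity flagged after \eqref{eq:gamemap}: $F_{1}$ is only positive semidefinite, vanishing on the directions in which the weighted allocations $N_{i}x^{i}$ cancel, so $T_{\gamma}$ is merely nonexpansive and not a contraction. Consequently the plain projected-gradient (Picard) iterate $x(k+1) = T_{\gamma}(x(k))$ may fail to converge (it can oscillate), and the Banach fixed-point theorem does not apply. The averaging turns $T_{\gamma}$ into an operator whose iteration is guaranteed to converge for nonexpansive self-maps, while the strict upper bound $\gamma < 2/\lambda_{\text{max}}(F_{1})$ is what keeps the spectrum of $\mathrm{Id} - \gamma F_{1}$ strictly inside $(-1,1]$ so that no reflecting mode with eigenvalue $-1$ survives; verifying these two facts carefully is the crux of the argument.
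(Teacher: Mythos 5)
Your proposal is correct and follows essentially the same route as the paper's proof: the variational-inequality characterization of the Nash equilibrium, its equivalence to the fixed-point condition $x^{*}=\Pi_{\overline{\mc K}}\left[x^{*}-\gamma F(x^{*})\right]$, non-expansiveness of the affine map via the spectral bound $\left|1-\gamma\lambda_i(F_1)\right|\leq 1$ composed with the non-expansive projection, and convergence of the Krasnoselskij averaged iteration on the compact convex set $\overline{\mc K}$. The added remarks on the Cartesian-product splitting of the projection and on why the $\tfrac{1}{2}$-averaging cannot be dropped are accurate elaborations of points the paper leaves implicit.
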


\begin{proof}
A point $\widehat{x}\in\overline{\mc K}$ is a Nash equilibrium of game $G$ with game map $F(x)$ defined by \eqref{eq:gamemap} if and only if $F\left(\widehat{x}\right)^{T}\left(y-\widehat{x}\right)\geq 0$ for all $ y \in \overline{\mc K}$ \cite{VISurvey}. One can prove that $F\left(\widehat{x}\right)^{T}\left(y-\widehat{x}\right)\geq 0$ holds for all $y\in\overline{\mc K}$ if and only if $\widehat{x}=\Pi_{\overline{\mc K}}\left[\widehat{x}-\gamma F\left(\widehat{x}\right)\right]$. Indeed, based on~\cite{ConvexOptimization} and the fact that
$z=\Pi_{\overline{\mc K}}\left[\widehat{x}-\gamma F\left(\widehat{x}\right)\right]$ is equivalent to $z$ being the minimizer of $\norm{z-\left(\widehat{x}-\gamma F(\widehat{x})\right)}^{2}_{2}$ over $\overline{\mc K}$, we have that $z=\Pi_{\overline{\mc K}}\left[\widehat{x}-\gamma F\left(\widehat{x}\right)\right]$ is equivavlent to $ 2\left(z-\left(\widehat{x}-\gamma F(\widehat{x})\right)\right)^{T}(y-z)\geq 0$, for all $y\in\overline{\mc K}$, which by setting $z=\widehat{x}$ completes the proof of equivalence. Now we have that $\widehat{x}$ is a Nash equilibrium if and only if it is a fixed point of $H(x)=\Pi_{\overline{\mc K}}\left[x-\gamma F\left(x\right)\right]$. Because $\overline{\mc K}$ is compact and convex, for any $\gamma$ such that $H(x)$ is non-expansive  and $x(0)\in\overline{\mc K}$, the iterative procedure $x(k+1)=0.5\left(x(k)+H\left(x(k)\right)\right)$ converges to the fixed point of $H(x)$ (the unique Nash equilibrium of $G$) according to \cite{ApproxFixPoint}. Since the projection operator is non-expansive, for $\gamma$ such that $\bar{H}(x)=\mathbb{I}x-\gamma F(x)=\left(\mathbb{I}-\gamma F_{1}\right)x-\gamma F_{2}$ is non-expansive, the map $H(x)$ will be non-expansive  too. $\bar{H}(x)$ is an affine map so it is non-expansive if $\norm{\mathbb{I}-\gamma F_{1}}_{2}\leq1$. Since $F_{1}$ is symmetric this is equivalent to $\max_{i}\left|\lambda_{i}\left(\mathbb{I}-\gamma F_{1}\right)\right|\leq1$. For $\gamma$ given in \eqref{eq:optgamma} and because $F_{1}\succeq 0$ this is guaranteed since 
$-1\leq 1-\gamma\lambda_{i}\left( F_{1}\right)\leq 1$ is true for all $i$.
\end{proof}

\section{Numerical Example}\label{sec:example}

We illustrate in this section how the proposed method can be utilized to balance the EVs so that the number of them charging at different stations is as close as possible to vector $\hat{N}$. We consider a scenario where 3 ride-hailing companies $\mc C=\{\mc C_1,\mc C_2,\mc C_3\}$, whose fleet sizes are given by $N=\left[60,35,45\right]^{T}$, operate in a square region with 4 charging stations $\mc M=\{\mc M_1,\mc M_2,\mc M_3,\mc M_4\}$. The stations are described by the vector of their capacities $M=\left[20, 10, 15, 10\right]^{T}$ and we set desired vehicle numbers around them to be $\hat{N}=\left[35,15,50,40\right]^{T}$.

Each vehicle $v_{j} \in V_{i}$ is described by a tuple $\left(x_{j}, y_{j}, s_{j}^{\text{start}}, s_{j}^{\text{des}}, d_{j}^{\text{max}}\right)$ where $(x_{j}, y_{j})\in \mathbb{R}^{2}$ describes the position of the vehicle, $d_{j}^{\text{max}}$ is the max range of the vehicle and $s_{j}^{\text{start}}, s_{j}^{\text{des}}$ represent the current and desired battery levels. The vehicles and charging stations are placed randomly with: $s^{\text{start}}\sim \mathcal{U}[20, 40]$, $s^{\text{des}}\sim \mathcal{U}[80, 100]$ and $d^{\text{max}}\sim \mathcal{U}[150, 200]$. The scenario is depicted in Figure \ref{fig:Map}.
\begin{figure}
    \centering
    \input{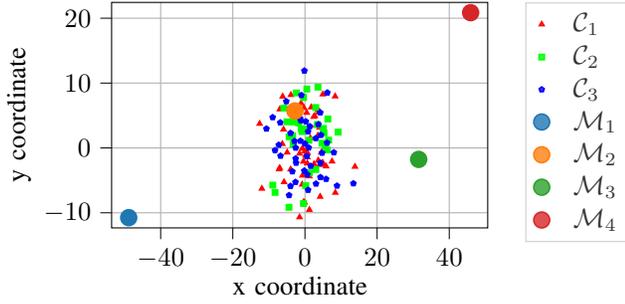}
    \caption{Shows the location of charging station 1 $\left(\mc M_{1}\right)$, station 2 $\left(\mc M_{2}\right)$, station 3 $\left(\mc M_{3}\right)$ and station 4 $\left(\mc M_{4}\right)$. Moreover, black triangles, squares and pentagons show the locations of vehicles that belong to company 1 $\left(\mc C_{1}\right)$, company 2 $\left(\mc C_{2}\right)$ and company 3 $\left(\mc C_{3}\right)$ respectively.}
    \label{fig:Map}
\end{figure} A station is considered to be feasible to a vehicle if the vehicle can reach it with the current battery status. For simplicity, if we assume a linear battery discharge model, a charging station $k$ is feasible for vehicle $j$ if $s_{j}^{\text{start}}-\frac{100}{d_{j}^{\text{max}}}d_{j,k}>0$ where $d_{j,k}$ denotes the distance between the vehicle $j$ and the charging station $k$ and $s_{j}^{\text{start}}$ is expressed in percentage. The average charging cost is modelled as $J_{2}^{i}\left(x^{i}, p_{i}\left(x^{i},x^{-i}\right)\right)=N_{i}\left(x^{i}\right)^{T}R_{i} p_{i}\left(x^{i},x^{-i}\right).$ Diagonal matrix $R_{i}\in \mathbb{R}^{4\times 4}$ captures the average charging demand per vehicle when choosing each of the charging stations. For infeasible charging stations the average demand is set to 0. Pricing policy $p_{i}$ denotes the price of one unit of charge at each station. If the charging station $k$ is feasible to vehicle $v_{l}\in V_{i}$, vehicle's charging demand if $k$ is chosen for charging is defined as $\delta_{l,k}=\beta_{l}\left(s_{l}^{\text{des}}-\left(s_{l}^{\text{start}}-\frac{100}{d_{l}^{\text{max}}}d_{lk}\right)\right)$.
Here $\beta_{l}\in \mathbb{R}$ is a scaling coefficient that says how many units of charge corresponds to $1\%$ of the vehicle's battery. The diagonal element of $R_{i}$ that corresponds to station $k$ is then given by $(R_{i})_{kk}=\left[\frac{1}{\left|\mc F_{k}^{i}\right|}\sum_{l:v_{l}\in \mc F_{k}^{i}}\delta_{l,k}\right]$. We model the negative expected revenue as $J_{3}^{i}\left(x^{i}\right)=\left(e_{i}^{\text{arr}}\right)^{T}N_{i}x^{i}-\left(e_{i}^{\text{pro}}\right)^{T}N_{i}x^{i}.$ Here, $e_{i}^{\text{arr}}\in \mathbb{R}^{4}$ is the average cost of a vehicle being unoccupied while traveling to a charging station. If station $k$ is infeasible, then we set $\left(e_{i}^{\text{arr}}\right)_{k}=0$, otherwise it is equal to $\left(e_{i}^{\text{arr}}\right)_{k}=u_{i}\cdot P_{k}\cdot\left[\frac{1}{\left|\mc F_{k}^{i}\right|}\sum_{l:v_{l}\in \mc F_{k}^{i}}d_{l,k}\right]$, where $u_{i}\in \mathbb{R}$ is the monetary value of a vehicle being occupied while driving for $1 \text{km}$, given in $\left[\$/\text{km}\right]$ and $P_{k}$ is the probability of a vehicle being occupied in the region around charging station $k$. The vector $e_{i}^{\text{pro}}\in \mathbb{R}^{4}$ denotes expected profit in regions around different charging stations. In general, this vector is obtained from historical data and here we choose it randomly such that each element of $e^{\text{pro}}$ satisfies $e^{\text{pro}}_{j}\sim \mathcal{U}[100,350]$. The sample drawn in this simulation is $e^{\text{pro}}=[202.51, 301.02, 252.34, 195.61]^{T}$. We fix other parameters to $\beta_{l}=1.0$, $Q=\text{diag}(1, 5, 3, 2)$ and $A_{G}=2Q$, vector of probabilities of being occupied $P=[0.15, 0.4, 0.2, 0.1]$ for all $k\in\mc M$, $u_{i}=1.0$ for all $i\in\mc C$, and set the number of iterations for the algorithm to $k=3000$. For this case study, the optimal pricing policy in accordance with \eqref{eq:optimalpolicy} is obtained by setting $D_{i}\defineas N_{i}R_{i}$ and $f_{i}\defineas N_{i}\left(e^{\text{arr}}-e^{\text{pro}}\right)$.

In the Nash equilibrium, car fleet portions to be directed to each of the charging stations and the resulting charging prices are presented in Table \ref{tab:res} whereas the evolution of the government loss $J_{G}$ and the total number of vehicles over the iterations is presented in Figure \ref{fig:results}.
\begin{table}
\begin{center}
 \renewcommand{\arraystretch}{1.2}
 \caption{Company decisions and charging prices}\vspace{1ex}
 \label{tab:res}
 \begin{tabular}{c|cc|cc|cc|cc}
 \multirow{2}{*}{$\mc C$} & \multicolumn{2}{c|}{Station 1} & \multicolumn{2}{c|}{Station 2} & \multicolumn{2}{c|}{Station 3} & \multicolumn{2}{c}{Station 4} \\ 
 & $x_{1}^{i}$ & $p_{i}$ &$x_{2}^{i}$ & $p_{i}$ & $x_{3}^{i}$ & $p_{i}$ & $x_{4}^{i}$ & $p_{i}$ \\
 \hline
 $\mc C_{1} \rule{0pt}{2.6ex}$ &  0.20 & 1.65  & 0.15 & 3.78  & 0.38 & 1.16  & 0.27 & 0.98  \\
 $\mc C_{2}$ &  0.19 & 1.75  & 0.16 & 4.12  & 0.41 & 1.48  & 0.24 & 1.17 \\
 $\mc C_{3}$ &  0.21 & 1.77  & 0.10 & 4.16  & 0.43 & 1.29  & 0.26 & 1.04 \\
\hline 
\end{tabular}
\end{center}
\end{table}
\begin{figure}
    \centering
\usepgfplotslibrary{groupplots}

\begin{tikzpicture}

\definecolor{color0}{rgb}{0.12156862745098,0.466666666666667,0.705882352941177}
\definecolor{color1}{rgb}{1,0.498039215686275,0.0549019607843137}
\definecolor{color2}{rgb}{0.172549019607843,0.627450980392157,0.172549019607843}
\definecolor{color3}{rgb}{0.83921568627451,0.152941176470588,0.156862745098039}
\definecolor{color4}{rgb}{0,0,0}

\begin{groupplot}[group style={group size=1 by 2}]
\nextgroupplot[
scaled x ticks=manual:{}{\pgfmathparse{#1}},
tick align=outside,
tick pos=left,
x grid style={white!69.0196078431373!black},
xmajorgrids,
xmin=1, xmax=2000,
xtick style={color=black},
xticklabels={},
y grid style={white!69.0196078431373!black},
ylabel={\(\displaystyle J_{G}(\sigma(x))\)},
ymajorgrids,
ymin=0, ymax=2331.74999993295,
ytick style={color=black},
height=3cm,
width=8cm,
]
\addplot [semithick, color4]
table {%
0 3214
13 2831.615234375
26 2499.9013671875
39 2212.125
52 1962.44775390625
65 1745.80871582031
78 1557.81994628906
91 1394.67700195312
104 1253.08154296875
117 1130.173828125
130 1023.47442626953
143 930.83349609375
156 850.386962890625
169 780.518615722656
182 719.827087402344
195 667.096984863281
208 621.274658203125
221 581.4462890625
224 575.31591796875
234 561.696899414062
255 535.819702148438
276 512.585998535156
298 490.786285400391
320 471.31201171875
343 453.169311523438
367 436.384521484375
392 420.961395263672
418 406.883605957031
446 393.677764892578
475 381.854736328125
506 371.011810302734
540 360.936798095703
576 352.013427734375
616 343.843200683594
660 336.585479736328
709 330.204071044922
765 324.615997314453
830 319.840637207031
908 315.845306396484
1004 312.669006347656
1129 310.285217285156
1308 308.667999267578
1616 307.789093017578
2498 307.545379638672
2999 307.542053222656
};

\nextgroupplot[
legend cell align={left},
legend style={fill opacity=0.8, draw opacity=1, text opacity=1, draw=white!80!black,at={(1,1.01)}},
tick align=outside,
tick pos=left,
x grid style={white!69.0196078431373!black},
xlabel={Iteration [k]},
xmajorgrids,
xmin=1, xmax=3000,
xtick style={color=black},
y grid style={white!69.0196078431373!black},
ylabel={$\sigma(x)$},
ymajorgrids,
ymin=12.9064516332694, ymax=60,
ytick style={color=black},
height=3.5cm,
width=8cm,
]
\addplot [semithick, color0]
table {%
1 31.9910945892334
48 31.592565536499
97 31.2161312103271
149 30.8560485839844
203 30.5207271575928
261 30.1997318267822
322 29.9010066986084
387 29.6213607788086
457 29.3592166900635
532 29.1172294616699
614 28.892068862915
703 28.6870574951172
801 28.5007591247559
909 28.3345909118652
1031 28.1863441467285
1169 28.0579090118408
1329 27.9481830596924
1519 27.8571071624756
1753 27.7843475341797
2059 27.7293758392334
2492 27.6923999786377
2999 27.6757354736328
};
\addlegendentry{$\sigma_{1}(x)$}
\addplot [semithick, color0, dashed, forget plot]
table {%
1 35
2999 35
};
\addplot [semithick, color1]
table {%
1 37.8965644836426
9 37.0892333984375
18 36.222412109375
27 35.3974151611328
36 34.6122283935547
45 33.8649291992188
54 33.1536903381348
64 32.4036026000977
74 31.6936187744141
84 31.0215969085693
94 30.3855075836182
105 29.7250194549561
116 29.1032733917236
127 28.5179958343506
139 27.9185943603516
151 27.357442855835
164 26.7898635864258
177 26.261417388916
191 25.7329921722412
205 25.2436962127686
220 24.7595806121826
223 24.6806564331055
230 24.5741348266602
262 24.1385250091553
296 23.7138271331787
332 23.3032913208008
369 22.9195919036865
408 22.5532341003418
449 22.2061595916748
493 21.8726806640625
539 21.5625705718994
588 21.2706356048584
641 20.9940986633301
697 20.740535736084
758 20.5033130645752
824 20.2857151031494
896 20.0873203277588
976 19.9063930511475
1064 19.7464027404785
1164 19.6039714813232
1278 19.4809722900391
1412 19.3761367797852
1573 19.2901496887207
1775 19.2226085662842
2044 19.1734561920166
2446 19.1420154571533
2999 19.1290321350098
};
\addlegendentry{$\sigma_{2}(x)$}
\addplot [semithick, color1, dashed, forget plot]
table {%
1 15
2999 15
};
\addplot [semithick, color2]
table {%
1 38.1034355163574
9 38.9107666015625
18 39.777587890625
27 40.6025848388672
36 41.3877716064453
45 42.1350708007812
54 42.8463096618652
64 43.5963973999023
74 44.3063812255859
84 44.978401184082
94 45.6144943237305
105 46.2749786376953
116 46.8967247009277
127 47.4820022583008
139 48.0814056396484
151 48.6425590515137
164 49.2101364135742
177 49.738582611084
191 50.2670059204102
205 50.7563056945801
220 51.2404174804688
223 51.3193435668945
230 51.4258651733398
262 51.8614730834961
296 52.2861747741699
332 52.6967086791992
369 53.0804061889648
408 53.4467658996582
449 53.7938385009766
493 54.1273193359375
539 54.4374313354492
588 54.7293663024902
641 55.0059013366699
697 55.259464263916
758 55.4966888427734
824 55.714282989502
896 55.9126777648926
975 56.0915718078613
1063 56.2519760131836
1162 56.3935279846191
1276 56.5171699523926
1409 56.6218948364258
1570 56.708553314209
1768 56.7755966186523
2032 56.8250045776367
2427 56.857120513916
2999 56.8709678649902
};
\addlegendentry{$\sigma_{3}(x)$}
\addplot [semithick, color2, dashed, forget plot]
table {%
1 50
2999 50
};
\addplot [semithick, color3]
table {%
1 32.008903503418
47 32.3993492126465
96 32.7765579223633
147 33.130802154541
201 33.4675064086914
258 33.7845878601074
319 34.0851593017578
384 34.3665390014648
454 34.6303062438965
529 34.8737907409668
610 35.0978050231934
698 35.3023948669434
795 35.488883972168
902 35.6557197570801
1023 35.8050346374512
1161 35.9356002807617
1321 36.0471458435059
1511 36.1397323608398
1741 36.2127113342285
2044 36.2686614990234
2492 36.3076019287109
2999 36.3242645263672
};
\addlegendentry{$\sigma_{4}(x)$}
\addplot [semithick, color3, dashed, forget plot]
table {%
1 40
2999 40
};
\end{groupplot}
\end{tikzpicture}
    \caption{The evolution of the government loss function and the total number of vehicles to be charged at each station during the iterative procedure for finding the Nash equilibrium. Dotted lines represent the target values. The execution time of all simulations on an average PC was smaller than 3 sec.}
    \label{fig:results}
\end{figure}
From the plot it is clear that the iterative procedure converged to a Nash equilibrium that is the government optimum but does not perfectly match the predefined vehicle accumulation vector $\hat{N}$ due to vehicle arrangement and their battery status. As expected, the prices of charging at station 2 are significantly higher than for any other charging station for all the companies as it has the smallest desired vehicle accumulation and is the most desirable in terms of expected profit and the distance to be travelled to reach it. Station 4 is the least attractive hence, it has the smallest charging prices in the Nash equilibrium. 

Apart from $R_{i}$ and $e^{\text{arr}}$, all other parameters are inherently known to the government as they characterize the region in which the companies operate. Hence, the government optimum is attainable if the companies are willing to share $R_{i}$ and $e^{\text{arr}}$ that encompass the information about the average state of the company's fleet. We test robustness of the proposed pricing policies and show how the system behaves in the same scenario when the government has only an estimate $\overline{R}_{i}$ of the average charging demand $R_{i}$. For a feasible station $k$, we let $\left(\overline{R}_{i}\right)_{kk}=\left(R_{i}\right)_{kk}+w_{k}$ where $w_{k}$ is a noise sample such that $w_{k}\sim \mathcal{N}\left(0,\left(\alpha R_{\text{min}}/5\right)^2\right)$ with $R_{\text{min}}$ being the minimal, non-zero, diagonal element of $R_{i}$. For every $\alpha$ we sample $w_{k}$ one hundred times and report the mean value of the government's loss in the Nash equilibrium. Figure \ref{fig:robustness} shows that for moderate discrepancies $(\alpha<0.6)$ between the true and the estimated value of $R_{i}$, the attained Nash equilibrium is close to the government's optimum. It also confirms that the worse the approximation is, the higher the deviation of the Nash equilibrium from $\hat{N}$ will be. 
\begin{figure}
    \centering
    \begin{tikzpicture}
\definecolor{color0}{rgb}{0.12156862745098,0.466666666666667,0.705882352941177}

\begin{axis}[
legend cell align={left},
legend style={fill opacity=0.8, draw opacity=1, text opacity=1, at={(1,1)}, draw=white!80!black},
tick align=outside,
tick pos=left,
x grid style={white!69.0196078431373!black},
xlabel={$\alpha$},
xmajorgrids,
xmin=0, xmax=1,
xtick style={color=black},
y grid style={white!69.0196078431373!black},
ylabel={$J_G\left(\sigma\left(x^{*}\right)\right)$},
ymajorgrids,
ymin=250, ymax=1000,
ytick style={color=black},
width=6cm,
height=2.5cm,
scale only axis
]
\path [fill=color0, fill opacity=0.2]
(axis cs:0,307.54204347855)
--(axis cs:0,307.541666666667)
--(axis cs:0.02,307.541750769875)
--(axis cs:0.04,307.543411683042)
--(axis cs:0.06,307.54395231032)
--(axis cs:0.08,307.547468514855)
--(axis cs:0.1,307.543533236821)
--(axis cs:0.12,307.543084246356)
--(axis cs:0.14,307.547447165175)
--(axis cs:0.16,307.544566674233)
--(axis cs:0.18,307.554477706717)
--(axis cs:0.2,307.600517318439)
--(axis cs:0.22,307.557457726101)
--(axis cs:0.24,307.55540258996)
--(axis cs:0.26,307.568064797637)
--(axis cs:0.28,307.57477290172)
--(axis cs:0.3,307.572688668571)
--(axis cs:0.32,307.730890900641)
--(axis cs:0.34,307.675751872636)
--(axis cs:0.36,307.595927844853)
--(axis cs:0.38,307.630548881414)
--(axis cs:0.4,307.924294150293)
--(axis cs:0.42,308.211219538683)
--(axis cs:0.44,308.205259102053)
--(axis cs:0.46,307.815622632583)
--(axis cs:0.48,307.669031522638)
--(axis cs:0.5,307.612046118769)
--(axis cs:0.52,307.638408524728)
--(axis cs:0.54,307.788871116432)
--(axis cs:0.56,307.63070749689)
--(axis cs:0.58,307.799104098857)
--(axis cs:0.6,307.777063740226)
--(axis cs:0.62,307.680276000258)
--(axis cs:0.64,307.562871886978)
--(axis cs:0.66,307.934001407845)
--(axis cs:0.68,308.508289023673)
--(axis cs:0.7,307.885088807658)
--(axis cs:0.72,307.915344716367)
--(axis cs:0.74,307.963697144023)
--(axis cs:0.76,307.558789150328)
--(axis cs:0.78,307.611221247201)
--(axis cs:0.8,308.290610379108)
--(axis cs:0.82,308.326406627607)
--(axis cs:0.84,308.369438087108)
--(axis cs:0.86,308.311413501)
--(axis cs:0.88,308.475169254503)
--(axis cs:0.9,308.37272225264)
--(axis cs:0.92,307.775898325359)
--(axis cs:0.94,307.814941514825)
--(axis cs:0.96,310.223729242653)
--(axis cs:0.98,309.098300658862)
--(axis cs:1,307.846433963435)
--(axis cs:1,1256.17621916934)
--(axis cs:1,1256.17621916934)
--(axis cs:0.98,1095.03750080163)
--(axis cs:0.96,829.147470496178)
--(axis cs:0.94,1136.4063634847)
--(axis cs:0.92,856.282063913467)
--(axis cs:0.9,696.835027293988)
--(axis cs:0.88,766.667994458673)
--(axis cs:0.86,722.267608361667)
--(axis cs:0.84,759.458071989746)
--(axis cs:0.82,695.277137008515)
--(axis cs:0.8,885.009561628473)
--(axis cs:0.78,701.173419072453)
--(axis cs:0.76,663.849461865276)
--(axis cs:0.74,626.583125058676)
--(axis cs:0.72,744.505044791555)
--(axis cs:0.7,664.709407326805)
--(axis cs:0.68,582.050462341987)
--(axis cs:0.66,552.512843673624)
--(axis cs:0.64,683.573132147604)
--(axis cs:0.62,515.281777657732)
--(axis cs:0.6,472.826883304557)
--(axis cs:0.58,573.194961673031)
--(axis cs:0.56,474.801898120511)
--(axis cs:0.54,468.30158381428)
--(axis cs:0.52,400.455884287157)
--(axis cs:0.5,498.49563818241)
--(axis cs:0.48,430.558271004135)
--(axis cs:0.46,420.899213114878)
--(axis cs:0.44,412.561739485018)
--(axis cs:0.42,401.347029609526)
--(axis cs:0.4,434.385340424769)
--(axis cs:0.38,433.643389597307)
--(axis cs:0.36,375.729641739875)
--(axis cs:0.34,390.028096238393)
--(axis cs:0.32,364.848963818671)
--(axis cs:0.3,349.312294076597)
--(axis cs:0.28,356.761485450364)
--(axis cs:0.26,343.590699547504)
--(axis cs:0.24,348.443351807584)
--(axis cs:0.22,334.959395974837)
--(axis cs:0.2,324.566374338771)
--(axis cs:0.18,322.173871262415)
--(axis cs:0.16,330.182790254336)
--(axis cs:0.14,320.261330770787)
--(axis cs:0.12,315.43882025878)
--(axis cs:0.1,312.429453054093)
--(axis cs:0.08,311.911622527332)
--(axis cs:0.06,309.025943529217)
--(axis cs:0.04,308.253974242082)
--(axis cs:0.02,307.65497434279)
--(axis cs:0,307.54204347855)
--cycle;

\addplot [semithick, color0]
table {%
0 307.541656494141
0.0199999809265137 307.56494140625
0.0399999618530273 307.667205810547
0.059999942779541 307.898284912109
0.0800000429153442 308.08935546875
0.100000023841858 308.391815185547
0.120000004768372 308.966888427734
0.139999985694885 309.360046386719
0.159999966621399 309.973571777344
0.180000066757202 310.619903564453
0.200000047683716 310.999481201172
0.220000028610229 312.714996337891
0.240000009536743 313.656280517578
0.259999990463257 314.440124511719
0.279999971389771 315.692626953125
0.299999952316284 316.103240966797
0.319999933242798 316.993957519531
0.340000033378601 320.458282470703
0.360000014305115 321.384582519531
0.379999995231628 323.635864257812
0.399999976158142 323.542449951172
0.419999957084656 327.828521728516
0.440000057220459 327.246124267578
0.460000038146973 333.309692382812
0.480000019073486 332.593231201172
0.5 333.703704833984
0.519999980926514 332.575347900391
0.539999961853027 344.728149414062
0.559999942779541 341.561248779297
0.579999923706055 346.881225585938
0.600000023841858 348.194793701172
0.620000004768372 347.806640625
0.639999985694885 353.116333007812
0.660000085830688 355.047119140625
0.680000066757202 357.635375976562
0.700000047683716 359.6513671875
0.720000028610229 368.817993164062
0.740000009536743 368.152984619141
0.759999990463257 372.766784667969
0.779999971389771 376.443115234375
0.799999952316284 397.569396972656
0.819999933242798 392.659698486328
0.839999914169312 386.850433349609
0.860000014305115 397.187225341797
0.879999995231628 394.451019287109
0.899999976158142 389.699798583984
0.920000076293945 401.026733398438
0.940000057220459 435.972015380859
0.960000038146973 436.870239257812
0.980000019073486 417.500335693359
1 449.406402587891
};
\addlegendentry{Mean}
\end{axis}

\end{tikzpicture}

    \caption{When the government does not have correct information about the vehicles' locations and what their charging demands are, the pricing policies divert from the optimal ones. For each $\alpha$, the mean value of $J_G$ is plotted over 100 simulations, together with its maximum and minimum value. }
    \label{fig:robustness}
\end{figure}
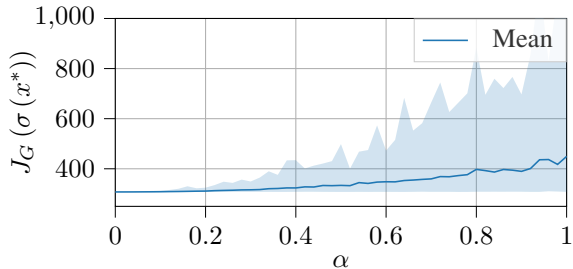

\section{Conclusions}\label{sec:conclusion}

In this paper we have developed a model for charge pricing of fleets of electric ride-hailing vehicles, where a central authority wants to control the demand on the charging stations through pricing. We constructed a set of pricing policies, and showed that those policies both give rise to a unique Nash equilibrium when each fleet operator wants to minimize its own operational cost and that this Nash equilibrium also minimizes the deviation from the central authority's desire.

In the future, we plan to deeper address the robustness of the proposed solution, something that is needed when the government does not have full knowledge of vehicles' position and charging demands.

\bibliographystyle{IEEEtran}
\bibliography{references.bib}

\iftoggle{full_version}{
\appendix
\subsection{Proof of Proposition~\ref{prop:Kbar}}\label{app:proofKbar}
\begin{proof}
We start by showing that it is possible to match each vehicle $v\in \mc V_{i}$ with exactly one charging station if and only if for all $S \subseteq \mc M$ the equation \eqref{eq:disccond} holds. To prove this, we look at a bipartite graph $G_{i}=\left(\mc V_{i}\cup \mc S_{i} , E_{i}\right)$ where $\mc S_{i}$ is defined as $\mc S_{i}=\bigcup_{j\in \mc M}\mc S_{i}^{j}$ such that for all $j_{1},j_{2}\in \mc M$, $j_{1}\neq j_{2}$ it holds that $\mc S_{i}^{j_{1}}\cap \mc S_{i}^{j_{2}}=\emptyset$. Each $\mc S_{i}^{j}$ is comprised of $n_{j}^{i}$ copies of the vertex that corresponds to the charging station $j$. The set of edges $E_{i}$ is formed such that $v\in \mc V_{i}$ is connected to  $s\in \mc S_{i}^{j}$ if $v\in \mc F_{j}^{i}$. The two sets have equal number of vertices $\left|\mc V_{i}\right|=N_{i}=\sum_{j\in\mc M}n^{i}_{j}=\left|\mc S_{i}\right|$ which means that desired matching is possible if and only if there exists an $\mc S_{i}-$perfect matching on graph $G_{i}$. Since condition \eqref{eq:disccond} corresponds exactly to the condition of the Hall's marriage theorem, the equivalence is proved. We now show that if $x^{i}\in\overline{\mc K}_{i}$ defined in Proposition~\ref{prop:Kbar} then $x^{i}$ is feasible. This means that $n_{j}^{i}$ defined according to Section \ref{sec:model}, satisfies the assumption given by \eqref{eq:disccond} for    $x^{i}\in\overline{\mc K}_{i}$ defined in Proposition \ref{prop:Kbar}. We distinguish 2 cases: $S\subset \mc M$ and $S=\mc M$. For $S\subset \mc M$ we can write $$\sum_{j\in S}n_{j}^{i}=\sum_{j\in P_{1}}\left\lfloor{N_{i}x^{i}_{j}}\right\rfloor + \sum_{j\in P_{2}}\left\lceil{N_{i}x^{i}_{j}}\right\rceil$$ 
where $P_{1}\cup P_{2}=S\wedge P_{1}\cap P_{2}=\emptyset$. We have
$\sum_{j\in P_{1}}\left\lfloor{N_{i}x^{i}_{j}}\right\rfloor\leq \sum_{j\in P_{1}}N_{i}x^{i}_{j}$ and
$\sum_{j\in P_{2}}\left\lceil{N_{i}x^{i}_{j}}\right\rceil= \sum_{j\in P_{2}}N_{i}x^{i}_{j}+\left\{N_{i}x_{j}^{i}\right\}$
where $\forall j \in P_{2}$ it holds that $\left\{N_{i}x_{j}^{i}\right\}\leq 1$. We have $$\sum_{j\in S}n_{j}^{i}\leq \sum_{j\in P_{1}\cup P_{2}}N_{i}x^{i}_{j}+\sum_{j\in P_{2}}\left\{N_{i}x_{j}^{i}\right\}\leq \sum_{j\in S}N_{i}x^{i}_{j}+\left|P_{2}\right|$$ which combined with \eqref{eq:setki} finally gives
$$\sum_{j\in S}n_{j}^{i}\leq \left|\bigcup_{j \in S} \mc F_{j}^{i}\right| - \left|S\right|+\left|P_{2}\right|\leq \left|\bigcup_{j \in S}\mc F_{j}^{i}\right|$$ because $\left|P_{2}\right|\leq \left|S\right|$. For $S=\mc M$ we have that the condition given by \eqref{eq:disccond} is fulfilled with the equality since $\sum_{j\in S}n_{j}^{i} = N_{i}=\left|\bigcup_{j \in S}\mc F_{j}^{i}\right|$. The case when for some $S$ it holds that $\left|\bigcup_{j \in S} \mc F_{j}^{i}\right| - \left|S\right|\leq 0$ leads to $x_{j}^{i}=0$ for all $j\in S$, which in return leads to $n_{j}^{i}=0$, so no matching is required in that case. By construction, 
sets $\mc K_{i}$ are defined as the intersection of a probability space and $2^{m}-2$ linear inequalities given by \eqref{eq:setki}, making them compact and convex. 
\end{proof}
}

\end{document}